\documentclass{article}
\usepackage{csquotes}
\usepackage[english]{babel}
\usepackage{listings}
\usepackage{xcolor}

\usepackage[letterpaper,top=2cm,bottom=2cm,left=3cm,right=3cm,marginparwidth=1.75cm]{geometry}
\usepackage[
  backend=biber,
  style=alphabetic,   
  sorting=nyt
]{biblatex}

\usepackage{mdframed}
\usepackage{amsmath}
\usepackage{amssymb}
\usepackage{amsmath,amssymb}
\usepackage[most]{tcolorbox}
\usepackage{thmtools}
\usepackage{amsthm}
\usepackage{bbm}
\usepackage{booktabs}   
\usepackage{graphicx}
\usepackage{braket}
\usepackage{comment}
\usepackage{xcolor}
\usepackage{tikz-cd}
\usepackage[colorlinks=true, allcolors=blue]{hyperref}
\usepackage{changepage}
\newtheoremstyle{indented}
  {\topsep}   
  {\topsep}   
  {\itshape}  
  {0em}     
  {\bfseries} 
  {.}         
  {.5em}      
  {}          
\theoremstyle{indented}

\newtheorem{theorem}{Theorem}[section] 
\newtheorem*{theorem*}{Theorem}

\newtheorem*{corollary*}{Corollary}
\newtheorem{lemma}[theorem]{Lemma} 

\newtheorem{proposition}[theorem]{Proposition} 
\usepackage{caption}
\usepackage{enumitem}
\usepackage{mathtools}

\usepackage{thm-restate}
\usepackage{multirow}

\newcommand{\tr}{\operatorname{tr}}
\newcommand{\diag}{\operatorname{diag}}

\title{Maximally entangled states are not complete for pseudo-telepathy}
\author{Olivier Lalonde \\
\small Institute for Quantum Computing \\
\small University of Waterloo \\
\small \texttt{olalonde@uwaterloo.ca}
}
\date{}

\begin{document}
\maketitle

\begin{abstract}
    One of the longstanding open problems in quantum nonlocality is to determine if maximally entangled states are complete for bipartite pseudo-telepathic games: namely, if every nonlocal game which admits a perfect entangled strategy admits such a strategy which uses a maximally entangled state. We exhibit a counterexample to this in the form of a bipartite nonlocal game with input sets of size 4 and 3 and output sets both of size 6. This game is part of a new class of nonlocal games, which we call inner product games, which could be of independent interest.
\end{abstract}

\section{Introduction}
Quantum pseudo-telepathy (\cite{heywood1983nonlocality, Brassard_2005}) refers to the phenomenon in which a given nonlocal game (\cite{cleve2010consequenceslimitsnonlocalstrategies}) admits a perfect entangled strategy but no perfect classical strategy. A game exhibiting this property is said to be pseudo-telepathic, and the most famous examples of such games are perhaps the magic square game (\cite{mermin1990simple, peres1990incompatible, aravind2002bell}) for the bipartite setting, and the Mermin-GHZ games (\cite{ghz1989going}, \cite{ghsz1990bell}, \cite{mermin1990mysteries}) for the multipartite setting. 

A longstanding open problem in this area is to characterize the type of entanglement that can be required by perfect entangled strategies for pseudo-telepathic games. Although a number of families of bipartite pseudo-telepathic games are known, it turns out that maximally entangled states always suffice to play these games perfectly, and it is open to determine if this is the case in general (\cite{mancinska2015maximallyentangledstatespseudotelepathy}). It should be noted that it has been known for some time that, for general nonlocal games, restricting to strategies using a maximally entangled state can lower the value of the game (\cite{Junge_2011, Vidick_2011}). Some partial progress in this direction was made recently by \cite{renner2026pureentangledstateslead}, which showed the existence of pseudo-telepathic games which can be won perfectly using a variety of non-maximally entangled states. This shows that maximally entangled states are not necessary to execute perfect entangled strategies for pseudo-telepathic games in general, but does not address the problem of whether they are always sufficient. 

In this work, we settle this question in the negative by exhibiting a counterexample:
\begin{theorem} \label{thm:thetheorem}
    There exists a bipartite game $\frak{G}_{4,3,6,6}$ with input sets of size 4 and 3 and output sets both of size 6 which has a perfect entangled strategy involving an entangled state of local dimension 6, but no perfect entangled strategy involving a maximally entangled state.  
\end{theorem}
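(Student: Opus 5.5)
We will construct $\frak{G}_{4,3,6,6}$ as an \emph{inner product game}. The guiding idea is that a perfect strategy $(\ket{\psi}, \{A^x_a\}, \{B^y_b\})$ can be rewritten, via $\ket{\psi} = (\sqrt{\rho}\otimes I)\ket{\Phi}$, in terms of the Gram data $\langle \psi | A^x_a \otimes B^y_b | \psi\rangle = \tr(A^x_a \sqrt{\rho}\, (B^y_b)^T \sqrt{\rho})$. A zero-probability constraint then becomes an orthogonality (inner product zero) condition between the ranges of Alice's projectors and the $\sqrt{\rho}$-twisted ranges of Bob's projectors.

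Concretely, we will fix local dimension $6$ and choose a specific full-rank, non-maximally-mixed diagonal $\rho = \diag(\lambda_1,\dots,\lambda_6)$. We then take Alice's $4$ projective measurements and Bob's $3$ projective measurements to be rank-one measurements in orthonormal bases $\{u^x_a\}$ and $\{v^y_b\}$ of $\mathbb{C}^6$. The win condition is set to forbid exactly those pairs $(a,b)$ with $\langle u^x_a|\sqrt{\rho}|\overline{v^y_b}\rangle = 0$ fails. More precisely, the allowed pairs on $(x,y)$ are those whose inner product is nonzero, so that the strategy wins with probability one by construction. This gives the first half of the theorem immediately.

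For the second half, suppose a perfect strategy uses a maximally entangled state of some dimension $d$. Any perfect strategy with maximally entangled state satisfies the standard synchronicity-type consequence: $A^x_a \ket{\Phi} = \sum_{b \text{ compatible}} (I\otimes B^y_b)\ket{\Phi}$. Equivalently, the relation $\tr(A^x_a (B^y_b)^T)=0$ holds for every forbidden pair, and this forces the projectors $A^x_a$ to be sums of the $(B^y_b)^T$ in a consistent way. In particular, whenever the compatibility graph between the $a$'s and $b$'s decomposes into blocks, each $A^x_a$ must commute with and be determined by the transposed Bob projectors. We will choose the bases so that these relations generate a finite combinatorial system: each input $x$ induces a partition of Bob's outcomes for each $y$, and the traces $\tr A^x_a$ must equal sums of $\tr B^y_b$. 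We then derive a linear system in the normalized ranks $\tr(A^x_a)/d$ and $\tr(B^y_b)/d$. These must be nonnegative and sum to one per input, and we will design the forbidden patterns so that this system is infeasible. The infeasibility exploits exactly the freedom the $\lambda_i$ gave us, namely weights that are not uniform.

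The main obstacle is the design: we need a choice of $\rho$ and bases such that the zero pattern yields rigid tracial constraints with no solution, while the $\sqrt{\rho}$-twisted version is consistent. We will first try bases built from a $2\times 3$ tensor structure with Fourier-type vectors, tune the $\lambda_i$ to create extra zeros, and verify infeasibility by checking the tracial linear program, using a computer-assisted search if necessary. One further concern is that tracial constraints alone might not suffice. In that case we would fall back on an operator-algebraic argument: we would show that the relations force a nonexistent projection in a tracial von Neumann algebra, which would also cover the infinite-dimensional commuting case.
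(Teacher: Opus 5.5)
The first half of your proposal is fine: it is essentially the paper's Proposition~\ref{prop:perfectIP} with $S=\sqrt{\rho}$. The gap is in the second half, and it goes beyond the fact that you never exhibit a concrete game. The mechanism you propose for excluding maximally entangled strategies cannot succeed for the games you intend to build.

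\textbf{The claimed identity is false.} The identity $A^x_a\ket{\Phi}=\sum_{b\in C(a)}(I\otimes B^y_b)\ket{\Phi}$ does not hold in general. A perfect maximally entangled strategy only gives $A^x_a(B^y_b)^T=0$ for losing pairs. Hence you get the inclusion $A^x_a\le\sum_{b\in C(a)}(B^y_b)^T$. Equality is only guaranteed after summing over a connected component $(A_i,B_i)$ of the compatibility graph, i.e.\ $\sum_{a\in A_i}A^x_a=\sum_{b\in B_i}(B^y_b)^T$.

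\textbf{The rank system is always feasible.} The constraints this yields on normalized ranks are $\sum_{a\in A_i}r^x_a=\sum_{b\in B_i}s^y_b$, together with Hall-type inequalities $\sum_{a\in T}r^x_a\le\sum_{b\in N(T)}s^y_b$. These are always satisfiable here:
\begin{itemize}
\item since $\sqrt{\rho}$ is full rank, $U_x^\dagger\sqrt{\rho}\,V_y$ is invertible;
\item so its support graph has a perfect matching;
\item hence $|A_i|=|B_i|$ for every component and Hall's condition holds;
\item so the uniform assignment $r\equiv s\equiv 1/6$ is feasible. These are just the normalized ranks of your own rank-one strategy.
\end{itemize}

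More conceptually, take any perfect strategy whose state $\ket{\psi}=(\sqrt{\rho}\otimes I)\ket{\Phi}$ has full Schmidt rank, and set $P=\sum_{a\in A_i}A^x_a$, $Q=\sum_{b\in B_i}B^y_b$. Perfection gives $(P\otimes I)\ket{\psi}=(I\otimes Q)\ket{\psi}$, i.e.\ $P\sqrt{\rho}=\sqrt{\rho}\,Q^T$. This forces $P\rho=\rho P$ and $P=Q^T$. So rank or marginal constraints can never separate maximally entangled strategies from general perfect ones. The non-uniform $\lambda_i$ change the probabilities $\tr(\rho A^x_a)$, not the ranks, so the infeasibility cannot ``exploit'' them at that level.

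\textbf{What is missing} is a genuinely noncommutative obstruction for one specific game. The paper proceeds as follows.
\begin{itemize}
\item It fixes $S=\diag(1,1,2,2,2,2)$ and explicit unitaries $U_1,\dots,U_4$, $V_1,V_2,V_3$, found by computer search.
\item Via Lemma~\ref{lem:Omega}, a perfect maximally entangled strategy would give a tracial state $f$ on the $*$-algebra generated by $M^x_a,N^y_b$. Here $E^x_a=(M^x_a)^*M^x_a$, $F^y_b=(N^y_b)^*N^y_b$, $[E^x_a,F^y_b]=0$, $\sum_aE^x_a=\sum_bF^y_b=1$, and $f(E^x_aF^y_b)=0$ on losing tuples.
\item It then shows that a strengthened level-$4$ tracial NPA relaxation is infeasible, with a rational certificate verified in Lean.
\end{itemize}
That relaxation is a semidefinite program imposing positivity of the moment matrix on words of length up to $4$, not a linear program on ranks. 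So tracial constraints do suffice, but only once higher-order moments and positivity are used.

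Your fallback to a tracial von Neumann algebra argument points in this direction but contains no actual argument. As written, the proposal is a search plan whose primary test would never certify infeasibility, not a proof of the theorem.
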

The game $\frak{G}_{4,3,6,6}$ is an instance of a new class of nonlocal games which we call inner product games and which are introduced in section \ref{sec:ipgames}. The construction of $\frak{G}_{4,3,6,6}$, which was found by computer search, as well as the proof that no perfect entangled strategy involving a maximally entangled state exists for it, which is also numerical and is based on the tracial NPA hierarchy, is described in section \ref{sec:thegame}. 

\section{Preliminaries}
Given two finite quantum systems $A$ and $B$ of dimension $m$, the (unnormalized) standard maximally entangled state on $AB$ is defined by:
\[\ket{\Omega}_{AB} = \sum_{i \in [m]} \ket{i}_A \ket{i}_B\]
More generally, a pure state on $AB$ is said to be maximally entangled if it has full Schmidt spectrum and its Schmidt coefficients are all equal. It follows from the Schmidt decomposition theorem that such a state is equivalent to $\ket{\Omega}$ up to local unitaries. The following properties of $\ket{\Omega}_{AB}$ are standard:
\begin{lemma} \label{lem:Omega}
    We have, for any square matrices $M,N$ of order $m$: 
    \begin{align*}
        (M_A \otimes I_B)\ket{\Omega}_{AB} &= (I_A \otimes M^T_B) \ket{\Omega}_{AB}\\
        \braket{\Omega | (M_A \otimes N_B) | \Omega} &= \tr(MN^T)
    \end{align*}
\end{lemma}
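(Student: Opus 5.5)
The plan is to verify both identities by a direct computation in the standard basis, expanding $M$ and $N$ in matrix units. Write $M = \sum_{j,k} M_{jk}\ket{j}\bra{k}$, so that $M\ket{i} = \sum_j M_{ji}\ket{j}$.

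For the first identity, expand the left-hand side:
\[(M_A\otimes I_B)\ket{\Omega}_{AB} = \sum_i M\ket{i}_A\ket{i}_B = \sum_{i,j} M_{ji}\ket{j}_A\ket{i}_B.\]
Expand the right-hand side the same way:
\[(I_A\otimes M^T_B)\ket{\Omega}_{AB} = \sum_j \ket{j}_A M^T\ket{j}_B = \sum_{j,i} (M^T)_{ij}\ket{j}_A\ket{i}_B = \sum_{i,j} M_{ji}\ket{j}_A\ket{i}_B.\]
The coefficients agree for every basis vector $\ket{j}_A\ket{i}_B$, so the two vectors are equal. The key point is just re-indexing and $(M^T)_{ij} = M_{ji}$.

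For the second identity, use the first one to move $M$ across:
\[(M_A\otimes N_B)\ket{\Omega} = (I_A\otimes N_B)(M_A\otimes I_B)\ket{\Omega} = (I_A\otimes N M^T)\ket{\Omega}.\]
It then suffices to show that $\braket{\Omega|(I\otimes X)|\Omega} = \tr X$ for any $X$:
\[\braket{\Omega|(I\otimes X)|\Omega} = \sum_{i,i'} \braket{i'|i}\braket{i'|X|i} = \sum_i X_{ii} = \tr X.\]
Applying this with $X = NM^T$ gives $\tr(NM^T) = \tr\big((NM^T)^T\big) = \tr(MN^T)$, using invariance of the trace under transposition.

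There is no real obstacle here. The only care needed is the order of the product: $(I\otimes N)(I\otimes M^T) = I\otimes NM^T$, not $I\otimes M^TN$. This is harmless anyway, since the trace is cyclic. Because $\ket{\Omega}$ is unnormalized, no normalization factor $1/m$ appears in the result.
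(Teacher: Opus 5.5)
Your proof is correct. The coefficient comparison via $(M^T)_{ij} = M_{ji}$ settles the first identity. Reducing the second to $\braket{\Omega | (I \otimes X) | \Omega} = \tr X$ with $X = NM^T$, and then using $\tr(NM^T) = \tr(MN^T)$, is also sound, and you handled the order of the product correctly. The paper gives no proof of this lemma and simply calls these properties standard; your direct basis computation is the routine verification that remark leaves to the reader.
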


A bipartite nonlocal game $\frak{G} = (X,Y,A,B,W, \pi)$ is specified by finite input sets $X,Y$, finite output sets $A,B$, a subset $W \subseteq X \times Y \times A \times B$ and a probability distribution $\pi$ on $X \times Y$. The associated scenario is that Alice and Bob are spatially separated, given inputs $x \in X, y \in Y$ sampled according to the distribution $\pi$, and requested to produce outputs $a \in A, b \in B$ according to some strategy. They are said to have won the game if $(x,y,a,b) \in W$. A strategy is said to be perfect if this is achieved with probability one. A strategy is said to be deterministic if Alice and Bob's outputs are deterministic functions of their inputs, and it is said to be classical if it is a probabilistic combination of deterministic strategies. It is simple to see that a perfect classical strategy exists for the game if and only if a perfect deterministic strategy exists for it. A strategy is said to be entangled if Alice and Bob share a given entangled state and their outputs are produced by measuring their respective share of the state. It is said to be maximally entangled if said shared entangled state is maximally entangled. Since any maximally entangled state $\ket{\Phi}_{AB}$ is equivalent to $\ket{\Omega}_{AB}$ up to local unitaries, it may be assumed that Alice and Bob share $\ket{\Omega}_{AB}$ in a maximally entangled strategy. 

When one is only interested in the existence of a perfect strategy for the game in a given model, the choice of probability distribution $\pi$ is irrelevant provided that every input pair has nonzero probability, and will be taken to be the uniform distribution unless specified otherwise. The game is said to be pseudo-telepathic if it admits a perfect entangled strategy but no perfect classical strategy. The first game with this property to have been introduced is due to \cite{heywood1983nonlocality}. 

A $*$-algebra $\mathcal{A}$ is an associative $\mathbb{C}$-algebra equipped with a conjugate-linear involution $*$ satisfying $(xy)^* = y^*x^*$ for all $x,y \in \mathcal{A}$. A $*$-algebra is said to be unital if it has a multiplicative identity, denoted 1: in this work, all $*$-algebras are assumed to be unital. Given a set $X$, we will write $\mathbb{C}\langle  X, X^*\rangle$ to mean the free unital $*$-algebra generated by $X$, i.e. the set of noncommutative polynomials in the elements of $X$ and their formal adjoints. An element $x \in \mathcal{A}$ is said to be positive if $x = y^* y$ for some $y \in \mathcal{A}$. A linear functional $f: \mathcal{A} \to \mathbb{C}$ is said to be positive if $f(x) \geq 0$ for all positive $x$. A positive linear functional $f$ is said to be a state if $f(1) = 1$, and a state $f$ is said to be tracial if $f(xy) = f(yx)$ for all $x,y \in \mathcal{A}$. 

\section{Inner product games} \label{sec:ipgames}
We introduce a new class of nonlocal games, which we call inner product games. These games are defined as follows. Let $m \geq 2$, let $\{U_x\}_{x \in [n_X]}$, $\{V_y\}_{y \in [n_Y]}$ be collections of unitary operators of order $m$, and let $S \neq 0$ be a positive semidefinite matrix of order $m$. Alice and Bob will be given $x \in [n_X], y \in [n_Y]$, respectively, and are to output $a, b \in [m]$. The winning condition is:
\begin{equation} \label{eq:winningcondition} \bra{a} U_x^\dagger S V_y \ket{b} \neq 0 \end{equation}
The justification for the name is that one may think of Alice and Bob as being given orthonormal bases of $\mathbb{C}^m$, here represented by the columns of $U_x$ and $V_y$, and having to produce vectors in their respective bases which are not orthogonal. We note that this class of games is quite general, and, for example, subsumes the games of \cite{renner2004pseudotelepathy, Cabello_2025}, which are recovered in the special case where $S = I$. We show:
\begin{proposition} \label{prop:perfectIP}
    Any inner product game has a perfect entangled strategy with an entangled state of local dimension $m$. 
\end{proposition}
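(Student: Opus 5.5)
The natural strategy is to use the state $\ket{\psi} = (\sqrt{S}\otimes I)\ket{\Omega}$, suitably normalized, with Alice and Bob measuring in the bases given by the columns of (possibly conjugated) $U_x$ and $V_y$. Lemma~\ref{lem:Omega} then reduces every outcome probability to the squared modulus of a matrix entry of the form appearing in \eqref{eq:winningcondition}. Since $S\neq 0$ is positive semidefinite, $\sqrt{S}$ is well defined and nonzero, so $\ket{\psi}\neq 0$. Concretely, I set
\[
\ket{\psi}_{AB} = \frac{1}{\sqrt{\tr(S)}}\,(\sqrt{S}_A\otimes I_B)\ket{\Omega}_{AB}.
\]
The normalization follows from the second identity of Lemma~\ref{lem:Omega}, which gives $\braket{\Omega|(S\otimes I)|\Omega}=\tr(S)>0$. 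This state has local dimension $m$.

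On input $x$, Alice measures in the orthonormal basis $\{U_x\ket{a}\}_{a\in[m]}$. On input $y$, Bob measures in the basis $\{\overline{V_y}\ket{b}\}_{b\in[m]}$, where the bar denotes entrywise conjugation; this is orthonormal because $\overline{V_y}$ is unitary. The amplitude of outcome $(a,b)$ is then
\[
(\bra{a}U_x^\dagger\otimes \bra{b}V_y^T)(\sqrt S\otimes I)\ket{\Omega} = \bra{a}U_x^\dagger\sqrt S\otimes \bra{b}V_y^T\,\ket{\Omega}.
\]
By the first identity of Lemma~\ref{lem:Omega}, applied with $M = V_y$ so that $M^T=V_y^T$, Bob's factor can be moved to Alice's side. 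The expression becomes $\bra{a}\otimes\bra{b}\,(U_x^\dagger\sqrt S V_y\otimes I)\ket{\Omega}$, which equals $\bra{a}U_x^\dagger\sqrt S V_y\ket{b}$ because $(\bra a\otimes\bra b)(M\otimes I)\ket\Omega = M_{ab}$. Hence
\[
\Pr[a,b\mid x,y]=\frac{|\bra a U_x^\dagger \sqrt S V_y\ket b|^2}{\tr S}.
\]

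It remains to show that every pair $(a,b)$ with nonzero probability satisfies \eqref{eq:winningcondition}. However, the probability involves $\sqrt S$ while the winning condition involves $S$, and in general $\bra{u}\sqrt S\ket{v}\neq 0$ does not imply $\bra u S\ket v\neq 0$. The first choice of state therefore needs to be adjusted; this is the main obstacle. I would fix it by splitting $S$ asymmetrically between the state and the condition. Take $\ket\psi\propto(S\otimes I)\ket\Omega$ and normalize by $\tr(S^2)>0$. Repeating the computation above, the amplitudes become exactly $\bra a U_x^\dagger S V_y\ket b/\sqrt{\tr(S^2)}$. The success probability is then $\sum_{(a,b)\text{ winning}}|\bra a U_x^\dagger S V_y\ket b|^2/\tr(S^2)$. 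Every losing pair has zero amplitude by definition, and the total of all probabilities is $\|U_x^\dagger S V_y\|_F^2/\tr(S^2)=1$ by unitary invariance of the Frobenius norm. So the strategy wins with probability one on every input pair, and the state has local dimension $m$.

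The only delicate points are the transpose and conjugation bookkeeping in Lemma~\ref{lem:Omega}, which is why Bob uses $\overline{V_y}$, and the choice of $S$ rather than $\sqrt S$ in the state so that the outcome amplitudes are exactly the entries of $U_x^\dagger S V_y$.
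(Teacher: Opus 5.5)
Your final strategy is correct and is exactly the paper's construction. You take the state $(S\otimes I)\ket{\Omega}$ normalized by $\tr(S^2)$, with Alice measuring in $\{U_x\ket{a}\}$ and Bob in $\{\overline{V_y}\ket{b}\}$, so that outcome probabilities are $|\bra{a}U_x^\dagger S V_y\ket{b}|^2/\tr(S^2)$; the paper phrases the same measurements as applying $U_x^\dagger$ and $V_y^T$ before a computational-basis measurement, and you were right to abandon the $\sqrt{S}$ attempt, since a zero entry of $U_x^\dagger S V_y$ need not be a zero entry of $U_x^\dagger \sqrt{S} V_y$.
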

\begin{proof}
    We can assume without loss of generality that $\tr(S^2) = 1$ by rescaling $S$ if needed, which doesn't affect the game. Set:
    \[\ket{\psi} = (S \otimes I) \ket{\Omega} \]
    The second point of Lemma \ref{lem:Omega} shows that $\ket{\psi}$ is a unit vector. We describe an entangled strategy which uses $\ket{\psi}$ as the shared entangled state. Given inputs $x,y$, Alice and Bob respectively apply $U_x^\dagger$ and $V_y^T$ to their shares of the state and measure in the computational basis to obtain their outputs $a, b$. We work out:
    \begin{align*}
        P(a,b \mid x,y) &= \bra{\psi} ((U_x \ket{a} \bra{a} U_x^\dagger) \otimes (\overline{V_Y} \ket{b} \bra{b} V_y^T)) \ket{\psi}\\
        &= \bra{\Omega} (S(U_x \ket{a} \bra{a} U_x^\dagger S) \otimes (\overline{V_Y} \ket{b} \bra{b} V_y^T)) \ket{\Omega}\\
        &= \tr(S(U_x \ket{a} \bra{a} U_x^\dagger S V_y \ket{b} \bra{b} V_y^\dagger)\\
        &= \|\bra{a} U_x^\dagger S V_y \ket{b}\|^2
    \end{align*}
    Therefore, if $(a,b)$ is a losing answer given the inputs $x, y$, its probability of occurring is zero. Thus this is a perfect entangled strategy.
\end{proof}

\section{Proving the main theorem} \label{sec:thegame}

Having introduced inner product games in the previous section, we turn to proving theorem \ref{thm:thetheorem}. Subsection \ref{subsec:gamedescription} describes the game $\frak{G}_{4,3,6,6}$ mentioned in the statement of the theorem and the algorithm we used to construct it. Subsection \ref{subsec:nome} describes the proof that no perfect maximally entangled strategy exists for $\frak{G}_{4,3,6,6}$.

\subsection{The case of the game $\frak{G}_{4,3,6,6}$} \label{subsec:gamedescription}
The game $\frak{G}_{4,3,6,6}$ is described as follows. We will take $m=6, n_X = 4, n_Y = 3$. The positive semidefinite matrix $S$ will be taken to be:
\[S = \diag(1,1,2,2,2,2)\]
The unitaries $\{U_x\}_{x \in [4]}$ and $\{V_y\}_{y \in [3]}$ which define the game are given in appendix \ref{app:unitaries}. By the proof of proposition \ref{prop:perfectIP}, this game has a perfect entangled strategy using the following entangled state:
\begin{equation} \label{eq:thestate}
    \ket{\Phi} = \frac{1}{\sqrt{18}}\left(\ket{1}\ket{1} + \ket{2} \ket{2} + 2\ket{3}\ket{3} + 2\ket{4}\ket{4} + 2\ket{5}\ket{5} + 2\ket{6}\ket{6} \right)
\end{equation}

This game was generated by means of a computer search using a heuristic procedure analogous to the one used by \cite{lalonde2025quantum} to manufacture the graph $G_{21}$.   For a given $m \geq 3$, all unitaries $\mathcal{U}$ in $U(m)$ whose columns all have components in $\{-1,0,1\}$ before normalization were enumerated up to trivial symmetries, such as permuting the columns or multiplying a given column by $-1$. We then searched for a non-scalar diagonal matrix $S$ with integral entries such that the corresponding inner product game with possible Alice and Bob input unitaries $\mathcal{U}$ is classically unfeasible. These choices were made so that the condition (\ref{eq:winningcondition}) is not satisfied for a non-negligible proportion of $x,y,a,b$, which heuristically increases the likelihood of the resulting game being classically unfeasible. This was unsuccessfully attempted for $m \in \{3,4,5\}$, and finally succeeded for $m=6$. Once our choice of $S$ was identified, a simple pruning strategy consisting of successively deleting an input whose removal does not make the game classically feasible until this is no longer possible was performed several times, and the smallest game obtained in this way was kept. 

It is simple to verify using exhaustive search that the game $\frak{G}_{4,3,6,6}$ has no perfect classical strategy. Much more strongly, numerical simulations suggest that the game is a self-test for the state (\ref{eq:thestate}), i.e. any entangled state $\ket{\Phi'}$ which enables a perfect entangled strategy for the game is such that $\ket{\Phi}$ can be extracted out of $\ket{\Phi'}$ using local operations. Proving this analytically seems difficult, but it is possible to show that no perfect maximally entangled strategy for the game exists. 

\subsection{Ruling out perfect maximally entangled strategies} \label{subsec:nome}
In order to rule out the existence of a perfect maximally entangled strategy for the game $\frak{G}_{4,3,6,6}$, we will use a variant of the sum-of-squares hierarchy of \cite{Navascu_s_2008, doherty2008quantummomentproblembounds}. Although, in its original form, this hierarchy is applicable to general entangled strategies, it is possible to impose additional constraints which are specifically satisfied by ones using a maximally entangled state, in a manner analogous to what was done in \cite{russell2023synchronousnpahierarchyapplications}. Establishing that there exists no correlation in one of the resulting relaxed correlation sets which wins at the game with probability one completes the proof of theorem \ref{thm:thetheorem}.

Let $\{M^x_a\}_{x \in [4], a \in [6]}$ and $\{N^y_b\}_{y \in [3], b \in [6]}$ be formal generators. Set $E^x_a = (M^x_a)^* M^x_a, F^y_b = (N^y_b)^* N^y_b$, and let $\mathcal{A}$ be the $*$-algebra generated by the $\{M^x_a\}$ and $\{N^y_b\}$ subject to the constraints that $[E^x_a, F^y_b] = 0$ for all $x,y,a,b$ and, for all $x \in [4], y \in [3]$, 
\[\sum_a E^x_a = \sum_b F^y_b = 1\]
It follows from part two of lemma \ref{lem:Omega} that the existence of a perfect maximally entangled strategy for $\frak{G}_{4,3,6,6}$ would imply the existence of a tracial state $f$ on $\mathcal{A}$ which satisfies $f(E^x_a F^y_b) = 0$ whenever $a,b$ are losing outputs given the inputs $x,y$.  Although determining whether such a state exists is undecidable in general (\cite{Slofstra_2019}), there is a hierarchy of relaxations of this problem which are tractable via semi-definite programming, which correspond to testing for the existence of a pseudo-state defined only on words in the generators of length bounded by a parameter $n$. The inexistence of such a pseudo-state implies the inexistence of such a global state. 

Using a slightly strengthened version of the $n=4$ level of this hierarchy, an AI agent (Codex) was able to demonstrate the inexistence of a tracial state of the aforementioned form, thereby completing the proof of Theorem \ref{thm:thetheorem}. The resulting semidefinite program involved $207,202$ scalar variables, $208,702$ constraints and required 7 minutes to solve on a laptop. A rational certificate of infeasibility was extracted which takes up 5.6 megabytes of memory. The entire proof of nonexistence of a perfect maximally entangled strategy for $\frak{G}_{4,3,6,6}$ was formalized in the proof assistant Lean (\cite{demoura2021lean4}), as \texttt{theorem noPerfectMaximallyEntangledStrategy (d : Nat)}. The proof package can be found in the Github repository \url{https://github.com/lalondeo/MaximallyEntangledIncomplete}.

\section{Conclusion}
In this work, we have settled the longstanding problem of whether maximally entangled states are always sufficient for pseudo-telepathy in the negative by introducing a counterexample in the form of the game $\frak{G}_{4,3,6,6}$. Our work suggests a number of avenues for future research, most prominently:
\begin{enumerate}
    \item Our resolution of the problem is heavily computer-assisted, and our counterexample was manufactured to have the desired properties. It would be interesting to look for a more natural counterexample, for which the insufficiency of maximally entangled states could perhaps be established analytically rather than by using semidefinite hierarchies, which yield little insight into why the result is true. 
    \item As mentioned in subsection \ref{subsec:gamedescription}, numerics suggest that the game $\frak{G}_{4,3,6,6}$ is a self-test for the entangled state (\ref{eq:thestate}). It would be interesting to try to prove this formally. In a more general vein, it is known (\cite{brassard2004minimumentangledstatedimension}, \cite{renner2004pseudotelepathy}) that unlike for nonlocality, which is enabled by any entangled pure state (\cite{gisin1992maximal}), there exist entangled pure states which do not enable pseudo-telepathy in any dimension. It would be interesting to study the set of states which admit a pseudo-telepathic self-test. Could this set be strictly smaller than the set of states which enable pseudo-telepathy? 
    \item Our result could also have implications for zero-error quantum information theory. A graph parameter called the entangled chromatic number was introduced in \cite{Briet_2015} in connection to the zero-error source-channel coding problem, and it was shown there that this graph parameter is upper bounded by the quantum chromatic number (\cite{cameron2006quantumchromaticnumbergraph}) and that the existence of a separation between the two would imply the existence of a pseudo-telepathic game for which maximally entangled states are insufficient for a perfect entangled strategy. It would be interesting to see if the reverse implication can be established, in which case our game $\frak{G}_{4,3,6,6}$ could be used to yield such a separation.
\end{enumerate}

\section{Acknowledgments}
We thank William Slofstra for useful exchanges, as well as Arthur Mehta for pointing us to the work \cite{renner2026pureentangledstateslead}.  We acknowledge the support of the Natural Sciences and Engineering Research Council of Canada (NSERC) grants ALLRP-578455-2022 and RGPIN-2023-03731. 
\paragraph{Statement of AI use}
AI tools were used extensively during the course of this work, but only as a means of executing ideas rather than producing them. Inner product games, the heuristic approach that was used to generate the game $\frak{G}_{4,3,6,6}$ as well as the idea of modifying the hierarchy of \cite{russell2023synchronousnpahierarchyapplications} to rule out the existence of a maximally entangled strategy were all invented by the author while very bored during a long car trip. The AI agent Codex was used to search for a matrix $A$ such that the resulting game is classically unfeasible as well as to implement the pruning algorithm to reduce the input sizes. On the author's request, Codex also implemented and ran the tracial NPA hierarchy, extracted a rational unfeasibility certificate, and formalized the inexistence of a perfect entangled strategy using a maximally entangled state for the game in Lean. The paper was written exclusively by the author, with AI tools only being used to search the literature. 

\printbibliography

\appendix

\section{The unitaries defining the game $\frak{G}_{4,3,6,6}$}
\label{app:unitaries}
The unitaries of order 6 which define the inner product game $\frak{G}_{4,3,6,6}$ are the following.
\[
\setlength{\arraycolsep}{3pt}
\renewcommand{\arraystretch}{1.15}
\begin{aligned}
U_1 &=
\begin{pmatrix}
0 & 0 & 0 & 0 & \tfrac{1}{\sqrt{2}} & \tfrac{1}{\sqrt{2}} \\
0 & 0 & 0 & 0 & -\tfrac{1}{\sqrt{2}} & \tfrac{1}{\sqrt{2}} \\
0 & 0 & 0 & 1 & 0 & 0 \\
0 & 0 & 1 & 0 & 0 & 0 \\
0 & 1 & 0 & 0 & 0 & 0 \\
1 & 0 & 0 & 0 & 0 & 0
\end{pmatrix},
&\qquad
U_2 &=
\begin{pmatrix}
\tfrac{1}{\sqrt{2}} & \tfrac{1}{\sqrt{2}} & 0 & 0 & 0 & 0 \\
-\tfrac{1}{\sqrt{2}} & \tfrac{1}{\sqrt{2}} & 0 & 0 & 0 & 0 \\
0 & 0 & \tfrac{1}{2} & \tfrac{1}{2} & \tfrac{1}{2} & \tfrac{1}{2} \\
0 & 0 & -\tfrac{1}{2} & -\tfrac{1}{2} & \tfrac{1}{2} & \tfrac{1}{2} \\
0 & 0 & -\tfrac{1}{2} & \tfrac{1}{2} & -\tfrac{1}{2} & \tfrac{1}{2} \\
0 & 0 & \tfrac{1}{2} & -\tfrac{1}{2} & -\tfrac{1}{2} & \tfrac{1}{2}
\end{pmatrix}
\\[1.5em]
U_3 &= \frac{1}{2}
\begin{pmatrix}
0 & 0 & 1 & 1 & 1 & 1 \\
0 & 0 & -1 & -1 & 1 & 1 \\
1 & 1 & 0 & 0 & -1 & 1 \\
-1 & -1 & 0 & 0 & -1 & 1 \\
-1 & 1 & -1 & 1 & 0 & 0 \\
-1 & 1 & 1 & -1 & 0 & 0
\end{pmatrix},
&\qquad
U_4 &= \frac{1}{2}
\begin{pmatrix}
0 & 0 & 1 & 1 & 1 & 1 \\
0 & 0 & -1 & -1 & 1 & 1 \\
1 & 1 & -1 & 1 & 0 & 0 \\
1 & 1 & 1 & -1 & 0 & 0 \\
-1 & 1 & 0 & 0 & -1 & 1 \\
1 & -1 & 0 & 0 & -1 & 1
\end{pmatrix}\\
V_1 &= \frac{1}{\sqrt{2}}
\begin{pmatrix}
0 & 0 & 0 & 0 & 1 & 1 \\
0 & 0 & 0 & 0 & -1 & 1 \\
0 & 0 & 1 & 1 & 0 & 0 \\
0 & 0 & -1 & 1 & 0 & 0 \\
1 & 1 & 0 & 0 & 0 & 0 \\
-1 & 1 & 0 & 0 & 0 & 0
\end{pmatrix},
&\qquad
V_2 &=
\begin{pmatrix}
0 & 0 & \tfrac{1}{2} & \tfrac{1}{2} & \tfrac{1}{2} & \tfrac{1}{2} \\
0 & 0 & -\tfrac{1}{2} & -\tfrac{1}{2} & \tfrac{1}{2} & \tfrac{1}{2} \\
0 & 0 & -\tfrac{1}{2} & \tfrac{1}{2} & -\tfrac{1}{2} & \tfrac{1}{2} \\
\tfrac{1}{\sqrt{2}} & \tfrac{1}{\sqrt{2}} & 0 & 0 & 0 & 0 \\
-\tfrac{1}{\sqrt{2}} & \tfrac{1}{\sqrt{2}} & 0 & 0 & 0 & 0 \\
0 & 0 & \tfrac{1}{2} & -\tfrac{1}{2} & -\tfrac{1}{2} & \tfrac{1}{2}
\end{pmatrix}
\\[1.5em]
V_3 &=
\begin{pmatrix}
0 & 0 & \tfrac{1}{2} & \tfrac{1}{2} & \tfrac{1}{2} & \tfrac{1}{2} \\
0 & 0 & -\tfrac{1}{2} & -\tfrac{1}{2} & \tfrac{1}{2} & \tfrac{1}{2} \\
0 & \tfrac{1}{\sqrt{2}} & -\tfrac{1}{2} & \tfrac{1}{2} & 0 & 0 \\
\tfrac{1}{\sqrt{2}} & 0 & 0 & 0 & -\tfrac{1}{2} & \tfrac{1}{2} \\
0 & \tfrac{1}{\sqrt{2}} & \tfrac{1}{2} & -\tfrac{1}{2} & 0 & 0 \\
\tfrac{1}{\sqrt{2}} & 0 & 0 & 0 & \tfrac{1}{2} & -\tfrac{1}{2}
\end{pmatrix}
& &
\end{aligned}
\]

\end{document}